
\documentclass[letterpaper, 10 pt, conference]{ieeeconf}

\IEEEoverridecommandlockouts  
\usepackage{graphicx}

\overrideIEEEmargins  

\pdfminorversion=4
\usepackage{fix-cm}
\usepackage{etex}

\usepackage{dblfloatfix}

\usepackage{nag}


\makeatletter
\@ifpackageloaded{xcolor}{}{%
\usepackage[table,x11names,dvipsnames,svgnames]{xcolor}%
}
\makeatother

\usepackage{colortbl}

\usepackage{graphicx}
\usepackage{wrapfig}

\definecolorset{RGB}{lyft}{}{Red,194,39,36;Sunset,202,53,33;Orange,205,68,20;Amber,200,117,42;Yellow,242,169,52;Citron,186,188,44;Lime,112,159,33;Green,56,139,31;Mint,45,118,56;Teal,52,133,135;Cyan,60,132,202;Blue,55,94,248;Indigo,64,13,247;Purple,115,42,248;Pink,176,25,145;Rose,176,32,75}


\usepackage{cite}

\usepackage{microtype}

\usepackage[american]{babel}

\usepackage{array}
\usepackage{multirow}
\usepackage{booktabs}
\usepackage{makecell} 

\newcommand{\myparagraph}[1]{\textbf{\emph{#1}}.}

\ifcsname labelindent\endcsname

\fi
\usepackage[inline]{enumitem}

\usepackage{subfig}


\setcounter{topnumber}{2}
\setcounter{bottomnumber}{2}
\setcounter{totalnumber}{4}

\newenvironment{lenumerate}[2][]
{\begin{enumerate}[label=(#2\arabic*),leftmargin=0.2in,itemindent=0.15in,#1]}
{\end{enumerate}}



\setlist*[enumerate,1]{label={\itshape\arabic*)}}

\makeatletter
\newcommand{\paragraphswithstop}{%
\let\copyparagraph\paragraph%
\renewcommand\paragraph[1]{\copyparagraph{##1.}}%
}
\makeatother

\usepackage[framemethod=tikz]{mdframed}

\makeatletter
\def\namedlabel#1#2{\begingroup
  #2%
  \def\@currentlabel{#2}%
  \phantomsection\label{#1}\endgroup
}
\makeatother

\makeatletter
\def\namedlabelphantom#1#2{\begingroup
  \def\@currentlabel{#2}%
  \phantomsection\label{#1}\endgroup
}
\makeatother

\newcommand{\parunskip}{\bgroup\unskip\parfillskip=0pt \par\egroup}


%
\input{preamble/math}
\newcommand{\real}[1]{\mathbb{R}^{#1}{}}

\newcommand{\bmat}[1]{\begin{bmatrix}#1\end{bmatrix}}

\newcommand{\transpose}{^\mathrm{T}}


\newcommand{\defeq}{\doteq}


\DeclarePairedDelimiter{\norm}{\lVert}{\rVert}

\newcommand{\de}{\mathrm{d}}

\newcommand{\vct}[1]{\mathbf{#1}}


\DeclareMathOperator{\blkdiag}{blkdiag}
\DeclareMathOperator{\symm}{sym}

\DeclareMathOperator{\stack}{stack}

\DeclareMathOperator{\cvxhull}{co}


\newcommand{\subjectto}{\textrm{subject to}\;}









\providecommand{\cA}{\mathcal{A}}
\providecommand{\cB}{\mathcal{B}}
\providecommand{\cC}{\mathcal{C}}

\providecommand{\cE}{\mathcal{E}}

\providecommand{\cL}{\mathcal{L}}

\providecommand{\cP}{\mathcal{P}}
\providecommand{\cQ}{\mathcal{Q}}

\providecommand{\cW}{\mathcal{W}}
\providecommand{\cX}{\mathcal{X}}

\providecommand{\cZ}{\mathcal{Z}}


\usepackage{units}


%

  \newcommand{\newcolorlabel}[2]{%
  \expandafter\newcommand\csname #1\endcsname[1]{%
    \tikz[baseline]{\node[text=white,fill=#2,anchor=base,text height=1.3ex,text depth=0.1ex,font=\sffamily\bfseries]{##1}}}%
}

%
\newcommand{\newcommenter}[2]{%
  \expandafter\newcommand\csname #1\endcsname[1]{%
    \fcolorbox{#2}{#2}{\color{white}\textsf{\textbf{#1}}}
    {\color{#2}##1}}%
  \expandafter\newcommand\csname at#1\endcsname{%
    \fcolorbox{#2}{#2}{\color{white}\textsf{\textbf{@#1}}}
    {\color{#2}}}%
  \expandafter\newcommand\csname #1cite\endcsname[1]{%
    \csname #1\endcsname {[##1]}
  }%
  \expandafter\newcommand\csname #1ref\endcsname[1]{%
    \csname #1\endcsname {$\blacktriangleright$##1}
  }%
  \expandafter\newcommand\csname #1hl\endcsname[2]{%
    \colorbox{#2}{\color{white}\textsf{\textbf{#1}}}\sethlcolor{Azure2}\hl{##2}~%
    \expandafter\ifx\csname commentarrow\endcsname\relax$\leftarrow$\else \commentarrow[#2]\fi~%
    {\color{#2}##1}}%
  \expandafter\newcommand\csname #1st\endcsname[2]{%
    \colorbox{#2}{\color{white}\textsf{\textbf{#1}}}\sout{##2}~%
    \expandafter\ifx\csname commentarrow\endcsname\relax$\leftarrow$\else \commentarrow[#2]\fi~%
    {\color{#2}##1}}%
}
\newcommenter{TODO}{DodgerBlue1}
\newcommenter{rtron}{Green3}


\usepackage{comment}

\usepackage{pdfcomment}

\usepackage{soul}

\usepackage[normalem]{ulem}

\usepackage{csquotes}


\usepackage{suffix}

\usepackage{environ}


\makeatletter
\newsavebox{\boxifnotempty}
\newcommand{\displayifnotempty}[3]{\sbox\boxifnotempty{#2}\setbox0=\hbox{\usebox{\boxifnotempty}\unskip}%
  \ifdim\wd0=0pt
  \else
  #1\usebox{\boxifnotempty}#3%
  \fi%
}

\newcommand{\ifempty}[2]{\setbox0=\hbox{#1\unskip}%
  \ifdim\wd0=0pt%
  #2%
  \fi%
}

\newcommand{\ifnotempty}[2]{\setbox0=\hbox{#1\unskip}%
  \ifdim\wd0>0pt%
  #2%
  \fi%
}
\makeatother

\newcommand{\switchifempty}[3]{\sbox\boxifnotempty{#1}\setbox0=\hbox{\usebox{\boxifnotempty}\unskip}%
  \ifdim\wd0=0pt{}%
  #2%
  \else{}%
  #3%
  \usebox{\boxifnotempty}%
  \fi%
}

\makeatletter
\@ifundefined{chapter}{\usepackage{algorithm}}{\usepackage[chapter]{algorithm}}
\makeatother
\usepackage{algorithmicx}
\usepackage{algpseudocode}
\makeatother%

\usepackage{scrlfile}

\makeatletter
\newcommand*\newstoreddef[1]{
  \BeforeClosingMainAux{%
    \immediate\write\@auxout{%
      \string\restoredef{#1}{\csname #1\endcsname}%
    }%
  }%
}
\newcommand*{\restoredef}[2]{
  \expandafter\gdef\csname stored@#1\endcsname{#2}%
}
\newcommand*{\storeddef}[1]{
  \@ifundefined{stored@#1}{0}{\csname stored@#1\endcsname}%
}
\makeatother



\usepackage{pageslts}
\pagenumbering{arabic}

\NewEnviron{tee}{\BODY\typeout{Marker Tee [start] ^^J \BODY ^^JMaker Tee [end]}}

\usepackage{cleveref}



\usepackage{tikz}
\usetikzlibrary{calc}
\usetikzlibrary{matrix}
\usetikzlibrary{chains,scopes}
\usetikzlibrary{shapes.geometric}
\usetikzlibrary{arrows.meta}
\usetikzlibrary{decorations.markings}
\usetikzlibrary{decorations.pathreplacing}
\usetikzlibrary{backgrounds}



\tikzset{
  dim above/.style={to path={\pgfextra{
        \pgfinterruptpath
        \draw[>=latex,|->|] let
        \p1=($(\tikztostart)!1.5em!90:(\tikztotarget)$),
        \p2=($(\tikztotarget)!1.5em!-90:(\tikztostart)$)
        in(\p1) -- (\p2) node[pos=.5,sloped,above]{#1};
        \endpgfinterruptpath
      }
    }
  },
  dim double above/.style={to path={\pgfextra{
        \pgfinterruptpath
        \draw[>=latex,|->|] let
        \p1=($(\tikztostart)!3em!90:(\tikztotarget)$),
        \p2=($(\tikztotarget)!3em!-90:(\tikztostart)$)
        in(\p1) -- (\p2) node[pos=.5,sloped,above]{#1};
        \endpgfinterruptpath
      }
    }
  },
  dim below/.style={to path={\pgfextra{
        \pgfinterruptpath
        \draw[>=latex,|->|] let 
        \p1=($(\tikztostart)!-1em!-90:(\tikztotarget)$),
        \p2=($(\tikztotarget)!-1em!90:(\tikztostart)$)
        in (\p1) -- (\p2) node[pos=.5,sloped,below]{#1};
        \endpgfinterruptpath
      }
    }
  },
}

\tikzset{
    right angle quadrant/.code={
        \pgfmathsetmacro\quadranta{{1,1,-1,-1}[#1-1]}     
        \pgfmathsetmacro\quadrantb{{1,-1,-1,1}[#1-1]}},
    right angle quadrant=1, 
    right angle length/.code={\def\rightanglelength{#1}},   
    right angle length=2ex, 
    right angle symbol/.style n args={3}{
        insert path={
            let \p0 = ($(#1)!(#3)!(#2)$) in     
                let \p1 = ($(\p0)!\quadranta*\rightanglelength!(#3)$), 
                \p2 = ($(\p0)!\quadrantb*\rightanglelength!(#2)$) in 
                let \p3 = ($(\p1)+(\p2)-(\p0)$) in  
            (\p1) -- (\p3) -- (\p2)
        }
    }
}


\newcommand{\pgfextractangle}[3]{%
    \pgfmathanglebetweenpoints{\pgfpointanchor{#2}{center}}
                              {\pgfpointanchor{#3}{center}}
    \global\let#1\pgfmathresult  
}

\usetikzlibrary{shapes.arrows}
\newcommand{\commentarrow}[1][Azure4]{\tikz[baseline=-3pt]{\node[shape border uses incircle, fill=#1,rotate=180,single arrow, inner sep=1pt, minimum size=6pt, single arrow head extend=2pt]{};}}


\tikzset{ax/.style={-latex,line width=2pt}}

\tikzset{camera/.style={fill=Sienna1,fill opacity=0.5},%
image plane/.style={draw=RoyalBlue3,line width=2pt}}



\newcommand{\rrtstar}{$\mathtt{RRT^*}$}
\newcommand{\astar}{$\mathtt{A^*}$}


\title{\LARGE \bf
Spline Trajectory Tracking and Obstacle Avoidance for Mobile Agents via Convex Optimization
}
\author{Akua Dickson, Christos G. Cassandras and Roberto Tron
\thanks{$^{1}$Akua Dickson, Christos Cassandras and Roberto Tron are with the Division of Systems Engineering,
        Boston University, 730 Commonwealth Ave, MA 02215, United States
        {\tt\small\{akuad, cgc, tron\}@bu.edu}}%
}
\begin{document}

\maketitle
\thispagestyle{empty}
\pagestyle{empty}

\begin{abstract}
We propose an output feedback control-based motion planning technique for agents to enable them to converge to a specified polynomial trajectory while imposing a set of safety constraints on our controller to avoid collisions within the free configuration space (polygonal environment).
To achieve this, we \begin{enumerate*}
\item decompose our polygonal environment into different overlapping cells
\item write out our polynomial trajectories as the output of a reference dynamical system with given initial conditions
\item formulate convergence and safety constraints as Linear Matrix Inequalities (LMIs) on our controller using Control Lyapunov Functions (CLFs) and Control Barrier Functions (CBFs)
and \item solve a semi-definite programming (SDP) problem with convergence and safety constraints imposed to synthesize a controller for each convex cell.
\end{enumerate*}
Extensive simulations are included to test our motion planning method under different initial conditions and different reference trajectories.
The synthesized controller is robust to changes in initial conditions and is always safe relative to the boundaries of the polygonal environment.
\end{abstract}

\section{INTRODUCTION}
The primary goal of path planning is to generate a trajectory for a mobile agent between an initial state and a goal state while ensuring collision avoidance at all times; it is a critical cornerstone of modern autonomous systems and robotics.

\myparagraph{Previous work}
Several path-planning algorithms have been proposed in the robotics literature and have been applied to various problems. They share the fundamental goal of finding a feasible collision-free path in a specific environment from an initial state to a goal state\cite{Campbell2020Path}. We briefly review here the most representative approaches and their characteristics.
Potential Field Algorithms are a traditional, computationally cheap method that produces relatively smooth trajectories by attracting the agent toward the goal state, but at the risk of becoming trapped in local minima \cite{victerpaul2017path}.
Alternative algorithms that are \emph{complete} (i.e., they avoid local minima) are \astar{} and \rrtstar{} (Optimized Rapidly Exploring Random Trees) algorithms which successfully avoid local minima and are considered the state of the art in high-dimensional planning spaces since they have good practical performance\cite{victerpaul2017path,Karaman2011Anytime}. Although sample-based planning techniques like \rrtstar{} have good performance and are optimal, they typically do not produce smooth trajectories.
One way to smooth out trajectories is by using direct collocation methods \cite{schulman2014motion} and other trajectory optimization methods.
These methods parameterize the trajectory using a discrete number of points, and then solve the planning problem as a constrained nonlinear optimization problem.
This class of algorithms is well-suited for problems with complex system dynamics and complex constraints, but offers only local convergence guarantees, and it typically needs to be initialized from an initial solution found by \astar{} or \rrtstar{}.

Cell decomposition is another popular path planning algorithm where the specified convex polygonal environment is decomposed into a set of cells \cite{Liu2017Planning}; \astar{} is used to find a feasible sequence of cells, and other techniques are then used to produce paths in each cell. \rrtstar{} algorithms may also be used to build the decomposition of the polygonal environment as seen in \cite{Bahreinian2021Sample-based}.
In this context, and in the context of collocation methods,
it is advantageous to parameterize smooth trajectories using a polynomial basis, such as spline curves\cite{Nguyen2021B-spline,Kano2018Spline,Egerstedt2009Control}. In recent years, splines have been applied in autonomous ground  \cite{Mercy2018Spline,Christian2017Spline-based}, and aerial \cite{Mellinger2011Minimum,Lai2018Optimal,Judd2001Spline} vehicles. The main advantage of this approach is the strong convex hull property of splines: by using a Bernstein polynomial basis, every trajectory will be contained in the convex hull of its control points. Therefore, by choosing the control points that lie completely within a convex collision-free set, we enforce the property that all points in the polynomial trajectory are also guaranteed to be collision-free.
B\'ezier curves are an alternative class of polynomial curves that are defined using a Bernstein polynomial basis and generate smooth polynomial trajectories for path planning \cite{Zheng2020Bezier,Choi2008Path,Zafer2022ANew}. However, B\'ezier curves cannot be implemented for path planning along higher-order polynomial trajectories.

Most of the path planning methods summarized above produce individual paths that need to be tracked using a separate low-level controller. In order to achieve convergence, Control Lyapunov Functions (CLFs) have been employed extensively for nonlinear control of input-affine systems. CLFs are the natural extension of Lyapunov functions to systems with control inputs, but use point-wise optimization to compute a control input $u$ that can asymptotically stabilize the system as desired. Similar concepts are used in Control Barrier Functions (CBFs) \cite{Ames2014Control,Ames2019Control,ames2017control}, which extend barrier functions to ensure safety (specifically, forward-invariance of safe sets) of control-affine systems. 

In the case of unexpected events or disturbances, changes in the goal of the planning problem or infeasible initial conditions, most of the methods reviewed so far would require replanning (i.e., solving the problem almost from scratch).
We build instead upon our previous work \cite{Bahreinian2021Robust,Bahreinian2021Sample-based}, which combines cell-decomposition-based approaches with control synthesis. Here, the trajectories are generated by designing a series of output feedback controllers that take as input the relative displacement of the agent with respect to a set of landmarks, steering the agent through the environment toward a goal location.
This approach is robust to discrepancies and reduces the need for re-planning; however it does not offer an easy way to manipulate directly the final path taken by the agent.

\myparagraph{Paper contributions}
Our approach combines elements of many of the techniques above, namely, control-based cell-decomposition methods, polynomial splines for representing trajectories, and CBFs and CLFs to guarantee convergence and safety over all the points in each cell. These concepts are combined and used to generate constraints for Semi-Definite Program (SDP) optimization problems \cite{boyd1994linear}, which synthesize a sequence of controllers that the agent can use to navigate.

With respect to potential-based methods, we offer completeness (and thus avoid local minima) while still, intuitively, pulling the agent toward a desired path. With respect to \astar{}, \rrtstar-based solutions, direct collocation methods, and basic spline parametrizations, we synthesize low-level, output-feedback controllers directly instead of single reference paths. With respect to cell decomposition methods, we synthesize an output-feedback controller for each cell in the convex polygonal environment that consequently ensures convergence and safety for the entire polygonal environment. By employing CLF-CBF methods, we ensure that the problem is always feasible. Our approach is more efficient than a CBF-based controller, as we do not need to solve an optimization problem online. Moreover, with respect to our own work, we offer convergence to a reference trajectory that can be pre-optimized to improve  performance\footnote{Note that our framework does not preclude the joint optimization of the trajectory and controllers, although this is left as future work}.

Our work consists of two main parts. Firstly, we generate the reference polynomial trajectories by writing these trajectories as the output of a reference dynamical system with 
given initial conditions. The second part involves the design of an output feedback controller in order to track the reference path from an initial state toward a goal state while avoiding collisions and remaining within the given polygonal environment.
We use SDP techniques to solve a sequence of robust convex optimization problems, where the constraints are formulated using CLFs \cite{ames2017control} and CBFs \cite{Springer}
in order to provide stability and safety guarantees respectively, for the given system. The CLF and CBF constraints (the latter after using duality theory) provide, respectively, LMIs and linear inequalities in the optimization problems\cite{Rahmani2021LMI,boyd1994linear}.

In summary, our work generates a reference trajectory from given control points and simultaneously tracks the reference trajectory using an output feedback controller while remaining safe relative to the boundaries of the free configuration space, which is novel, and a more efficient approach to path planning.

\section{PRELIMINARIES}
In this section, we review the key elements required to formulate our research problem. We present the dynamics of the agent, the decomposition of the polygonal environment in which the agent's dynamics evolve and we discuss how the reference polynomial trajectories are generated from the given control points. We then provide a formal problem statement to summarise the research problem we seek to solve. To help the reader in the following sections, we collect in \Cref{table:notation}
a reference list for some of the notation introduced in this work.

\subsection{Dynamical system}
We model the agent using a Linear Time Invariant system of the form
\begin{equation}
\begin{aligned}
\label{eq:sys robot}
\dot{x}&=Ax+Bu\\
y&=Cx
\end{aligned}
\end{equation}
where $x\in\real{d}$ is the state of the system, $y\in\real{d_y}$ is the output, $u\in\real{d_u}$ is the vector of control inputs, and $A,B,C$ are matrices of appropriate dimensions specifying the dynamics of the system.

\myparagraph{Special multidimensional case}
In the following, we will prove additional results for the common case where $y$ can be decomposed as a collection of system, each one with dimension $d_y=1$:
\begin{equation}
\label{eq:robot system dynamics}
\begin{aligned}
A&=\blkdiag(\{A_k\}_{k=1}^{d_y})\\
B&=\blkdiag(\{B_k\}_{k=1}^{d_y})\\
C&=\blkdiag(\{C_k\}_{k=1}^{d_y})\\
x&=\stack(\{x_k\}_{k=1}^{d_y})\\
y&=\stack(\{y_k\}_{k=1}^{d_y})\\
\end{aligned}
\end{equation}
and each $y_k\in\real{}$ is one-dimensional.

\begin{table}[b]
\begin{tabular}{lp{0.75\linewidth}}
$d$ & Dimension of the state $x$\\
$d_u$ & Dimension of the input vector $u$\\
$n_p$ & Order of the polynomial spline.\\
$I_d$ & $d\times d$ identity matrix\\
$0_d$, $0_{d_1,d_2}$ & All-zeros vector/matrix of dimension $d$/$d_1\times d_2$\\
$d_n$ & Dimension of the aggregate state $z$\\
$[x]_i$ & $i$-th element of the vector $x$\\
$x^{(n)}$ & $n$-th derivative of $x$\\
$co(\cdot)$ & convex hull\\
$stack(x,y)$ & $\bmat{x&y}^T$\\
$sym(C)$ & symmetric part of a matrix $C$\\
$blkdiag(x,y)$ & $\bmat{x&0\\0&y}$

\end{tabular}
\caption{Table of notation}
\label{table:notation}
\end{table}

\subsection{Polynomial trajectories}
We assume trajectories of a polynomial form $p(t):[0,1]\to\real{d}$ where

\begin{equation}
\label{eq:polynomial}
p(t)=\sum_{i=0}^{n_p} a_i t^i=\sum_{i=0}^{n_p} P_ib_{i,n_p}(t),
\end{equation}
for $t\in[0,1]$, where 
\begin{equation}
\label{eq:control points}
P=\bmat{P_0&P_1&P_2&.....&P_{n-1}}\in\real{d\times (n_p+1)}
\end{equation}
is a set of $n$ \emph{control points}, and $b_{i,n}$ represent the Bernstein basis polynomials $b_{i,n}(t)=\binom{n}{i}t^i(1-t)^{n-i}\in\real{}$ , $i \in\{0, 1, 2,\ldots ,n\}$, and $\cA=\bmat{a_0&a_1&\ldots& a_{n-1}}$ is a matrix of polynomial coefficients.
Equation \eqref{eq:polynomial} gives two equivalent representations for the reference polynomial trajectories adopted in this paper.
This equivalence is established in the following lemma proved in \cite{Ezhov2021Spline}.
\begin{lemma}
\label{lemma:lemma1}
Polynomial representations with both matrix $P$ and $\cA$ can be written as
\begin{equation}
    \cA=PD
\end{equation}
Given the standard polynomial basis ($\cA$), $D$ is an invertible transformation matrix that derives the Bernstein polynomial basis ($P$) from the standard polynomial basis such that:
\begin{equation}
    \stack(\{b_{i,n}(t)\})=D  \stack(\{t^i\}).
\end{equation}
Then $\cA=PD$.
\end{lemma}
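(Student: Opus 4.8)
The plan is to treat the two representations in \eqref{eq:polynomial} as a single vector identity in the scalar variable $t$ and then invoke the linear independence of the monomials to match coefficients. First I would collect the scalar basis functions into column vectors: let $\stack(\{t^i\})=\bmat{1 & t & \cdots & t^{n_p}}\transpose\in\real{n_p+1}$ denote the monomial vector and $\stack(\{b_{i,n_p}(t)\})\in\real{n_p+1}$ the Bernstein vector. With the coefficient arrays $\cA$ and $P$ viewed as elements of $\real{d\times(n_p+1)}$ (columns $a_i$, respectively $P_i$, in $\real{d}$), the two sums in \eqref{eq:polynomial} become the compact products $p(t)=\cA\,\stack(\{t^i\})$ and $p(t)=P\,\stack(\{b_{i,n_p}(t)\})$.

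Next I would substitute the defining relation $\stack(\{b_{i,n_p}(t)\})=D\,\stack(\{t^i\})$ into the second expression, giving $p(t)=PD\,\stack(\{t^i\})$. Equating the two representations yields $\cA\,\stack(\{t^i\})=PD\,\stack(\{t^i\})$ for every $t\in[0,1]$, i.e. $(\cA-PD)\,\stack(\{t^i\})=0$ identically in $t$. Because a nonzero polynomial has only finitely many roots, the monomials $1,t,\ldots,t^{n_p}$ are linearly independent as functions on $[0,1]$; hence each row of the constant matrix $\cA-PD$, being the coefficient vector of an identically-zero polynomial, must vanish. This gives $\cA=PD$, the claimed identity.

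It remains to justify that $D$ is well-defined and invertible. I would establish both at once from the binomial expansion $b_{i,n_p}(t)=\binom{n_p}{i}t^i(1-t)^{n_p-i}=\binom{n_p}{i}\sum_{k=0}^{n_p-i}\binom{n_p-i}{k}(-1)^k t^{i+k}$, which shows that each $b_{i,n_p}$ is a monomial combination starting at $t^i$. Thus the matrix $D$ reading off these coefficients is triangular with diagonal entries $\binom{n_p}{i}\neq 0$, so $\det D=\prod_i \binom{n_p}{i}\neq 0$ and $D$ is invertible. Equivalently, the Bernstein polynomials of degree $n_p$ form a basis of the $(n_p+1)$-dimensional space of polynomials of degree at most $n_p$, so $D$ is simply the change-of-basis matrix between two bases and is automatically invertible.

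The computations here are all routine, and I do not expect a genuine obstacle. The only thing to be careful about is notational bookkeeping: reconciling the index range (the paper writes $P$ up to $P_{n-1}$ while \eqref{eq:polynomial} sums to $n_p$), and keeping straight that $\cA$ and $P$ carry the $d$-dimensional vector coefficients while $D$ acts purely on the scalar basis of length $n_p+1$. Fixing $n=n_p+1$ and stating the dimensions explicitly at the outset removes any ambiguity.
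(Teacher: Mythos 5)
Your proposal is correct and complete. Note that the paper does not actually supply a proof of \Cref{lemma:lemma1}; it simply defers to the citation \cite{Ezhov2021Spline}, so there is no in-paper argument to compare against. Your route is the standard self-contained one: writing the two sums in \eqref{eq:polynomial} as $p(t)=\cA\,\stack(\{t^i\})$ and $p(t)=P\,\stack(\{b_{i,n_p}(t)\})=PD\,\stack(\{t^i\})$, and then invoking the linear independence of the monomials to conclude $\cA=PD$; the binomial expansion $b_{i,n_p}(t)=\binom{n_p}{i}\sum_{k}\binom{n_p-i}{k}(-1)^k t^{i+k}$ both constructs $D$ explicitly and shows it is triangular with nonzero diagonal entries $\binom{n_p}{i}$, hence invertible. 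This is consistent with the paper's cubic example (e.g., $a_3=P_3-3P_2+3P_1-P_0$), which your $D$ reproduces. Your closing remark about the index bookkeeping is also apt: the paper's \eqref{eq:control points} lists control points up to $P_{n-1}$ while the sum in \eqref{eq:polynomial} runs to $n_p$, and fixing $n=n_p+1$ as you do resolves the mismatch. The only thing you might add for completeness is one sentence making explicit that the matrix identity $\cA=PD$ holds row by row because each of the $d$ coordinates of $p$ is a scalar polynomial identity, but this is cosmetic rather than a gap.
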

The representation that implements the matrix of polynomial coefficients $\cA$ is vital for the controller synthesis in Section \ref{section:controlsynthesis}.
A cubic spline can be written as:
\begin{multline}
    p(t)=a_3t^3+a_2t^2+a_1t+a_0\\
    =P_3t^3+3P_2(t^2-t^3)+3P_1 (t-2t^2+t^3) +P_0(1-3t+3t^2-t^3)
\end{multline}
hence we have 
\begin{equation}
\begin{aligned}
&a_3=P_3-3P_2+3P_1-P_0,&&a_2=3P_2-6P_1+3P_0,\\
&a_1=3P_1-3P_0,&&a_0=P_0
\end{aligned}
\end{equation}

\subsection{Bounding Bernstein polynomials and their derivatives}
Let $b_n\in\real{n+1}(t)$ be the vector of all Bernstein polynomials of order $n$. Let the matrix $H_n\in\real{(n+1)\times n}$ be defined as 
\begin{equation}
    H_n=n\left(\bmat{-I\\0\transpose}+\bmat{0\transpose\\I}\right),
\end{equation}
where $0\transpose$ denotes a row of all zeros; define also
\begin{equation}
    H_{n,q}=\prod_{m=n}^{n-q+1} H_m.
\end{equation}
Then, we have  
\begin{equation}
    \dot{b}_n(t)=H_nb_{n-1}(t)
\end{equation}
and, for higher-order derivatives of order $q$, 
\begin{equation}
    b_n^{(q)}(t)=H_{n,q} b_{n-q}(t).
\end{equation}

Define the $n$-dimensional probability simplex as
\begin{equation}
    \Delta_n=\{\rho\in\real{n+1}: 0\leq\rho\leq 1, \vct{1}\transpose\rho=1\}.
\end{equation}
Thanks to the partition of unity property of the Bernstein polynomials, the polynomial $p(t)$ can be bounded as 
\begin{equation}
\label{eq:bernsteinpolynomialbounds}
    p(t)\in \{P\rho:\rho\in \Delta_{n_p}\}\; \forall t\in[0,1]
\end{equation}
(this is the same as stating that $p(t)\in\cvxhull(\{P_i\})$).
The derivatives of $p(t)$ can then be bounded as
\begin{equation}
\label{eq:bernsteinpolynomialderivativesbounds}
    p^{(q)}(t)\in \{PH_{n_p,q}\rho:\rho\in\Delta_{n_p-q}\}\; \forall t\in[0,1].
\end{equation}
This important property of polynomial splines i.e. $p(t)\in\cvxhull(\{P_i\})$ for all $t\in[0,1]$ \cite{Piegl1995Nurbs}, implies that once the control points $P$ are in a specified polytope, the polynomial trajectory would always be within the polytope, hence its importance. This allows us to synthesize controllers for each polytope while ensuring that the polynomial trajectories remain within the polytope and avoid collisions with the polytope walls at all times. The polynomial trajectories are represented using the matrix of control points $P$ in order to exploit this property.
\subsection{Polygonal Environment Decomposition}
\label{section:environmentdecomp}
We assume a polygonal environment as our free configuration space for both trajectory generation and controller synthesis. The convex polygonal environment $\cE\in\real{d}$ is decomposed into a finite number of convex cells each given as $\cX$. These convex cells may overlap, but must collectively cover the entire environment $\cE$, i.e.,  $\bigcup_{e} \cX_e = \cE$. Each cell $\cX$ is a polytope that can be represented by the inequality $Ax\leq b$ where $x \in \cX$ such that the matrix $A$ and vector $b$ describe the specified polytope.
Without the overlapping region, the control point shared by two adjacent convex cells would lie on the face that both cells have in common. Therefore, introducing overlapping regions is advantageous since this allows for more flexibility in the position of the shared control point, i.e., the shared control point may lie anywhere in the overlapping polygonal space.

\subsection{Problem statement}
The goal of our work is to design an output feedback controller for each convex cell and ensure that within each cell the agent converges to the specified reference polynomial trajectory while remaining safe with respect to the walls of the polygonal configuration environment. Specifically, we want to design an output feedback control of the form

\begin{equation}\label{eq:robot reference control}
    u=K\bmat{y\\p}
\end{equation}
\begin{equation}
\label{eq:K}
 K= \bmat{K_y&K_p}
\end{equation}
where $K_y$ is the component of the matrix $K$ that corresponds to the system output $y$ and $K_p$ is the component of $K$ that corresponds to the reference trajectory $p$. The tracking control objective is
\begin{equation}
\lim_{t\to\infty} \bigl(p(t)-x(t)\bigr)=0
\end{equation}

\section{PROPOSED SOLUTION}
 Our proposed solution comprises of five integral parts. In the first part, we begin by writing the reference polynomial trajectories as the output of a reference dynamical system with given initial conditions. We then consider both the agent system dynamics and the dynamics of the reference polynomial trajectories obtained in part one as a joint dynamical system in the second part. Furthermore, in the third part, we derive stability constraints in order to ensure that the synthesized controllers steer the agent onto the reference polynomial trajectories. The fourth part details the derivation of safety constraints that enable the agent to avoid collisions with the walls of the polygonal configuration space. We conclude by combining both safety and stability constraints in a controller synthesis problem where we design a controller to track the given reference polynomial trajectory without colliding with the walls.
\subsection{Polynomial trajectories}
In this section, we show that any polynomial trajectory of the form \eqref{eq:polynomial} can be written as the output of an autonomous linear dynamical system where the system matrices are fixed by the order of the polynomial, and the initial conditions determine the overall shape.

We formally define the reference system as
\begin{subequations}\label{eq:sys reference}
    \begin{align}
        \dot{x}_p&=A_px_p,\\
        y_p&=C_p x_p,
    \end{align}
\end{subequations}
where $A_p,C_p$ are defined so that $y_p$ follows a prescribed polynomial trajectory $p(t)$. The initial conditions of the reference dynamical system are important because they determine the overall shape of the reference polynomial trajectories.

\begin{lemma}
\label{lemma:lemma3}
Assume $d_y=1$. Given a polynomial $p(t)$ of the form \eqref{eq:polynomial}, let
\begin{subequations}
\label{Lemma}
\begin{align}
   A_p=\bmat{0_{n_p} & I_{n_p}\\0 & 0_{n_p\transpose}},\label{eq:Ap}\\
   C_p=\bmat{I_{d} & 0_{d\times n_pd}}.\label{eq:Cp}
   \end{align}
\end{subequations}
Then $y_p(t)=p(t)=[x_p]_0(t)$ (the first entry of the state vector) for all $t$ if the initial conditions of the system satisfy
\begin{equation}
    [x_p(0)]_i=a_i i!\textrm{ for all } i,\label{eq:xp initial}
\end{equation}
\end{lemma}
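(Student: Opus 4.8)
The plan is to recognize that the matrix $A_p$ in \eqref{eq:Ap} is simply the nilpotent super-diagonal shift matrix: its only nonzero entries are the $n_p$ ones placed just above the main diagonal. Consequently the autonomous reference system $\dot x_p = A_p x_p$ decouples, in components, into the chain of integrator relations $[\dot x_p]_i = [x_p]_{i+1}$ for $i = 0,\dots,n_p-1$, together with $[\dot x_p]_{n_p} = 0$. This immediately suggests that the natural state is the vector of successive derivatives of $p$, so I would prove the lemma by a guess-and-verify argument combined with the uniqueness of solutions of linear ODEs.

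First I would parse the block structure of $A_p$ and $C_p$ to confirm the componentwise form of the dynamics, and note that $C_p$ (under $d_y=1$) simply extracts the first coordinate, i.e. $y_p = [x_p]_0$. Second, I would propose the candidate trajectory $[x_p]_i(t) = p^{(i)}(t)$, the $i$-th derivative of $p$, and verify directly that it satisfies the chain: differentiating gives $[\dot x_p]_i = p^{(i+1)} = [x_p]_{i+1}$ for $i<n_p$, while $[\dot x_p]_{n_p} = p^{(n_p+1)} = 0$ because $p$ has degree $n_p$ by \eqref{eq:polynomial}. Third, evaluating the candidate at $t=0$ and using that differentiating the monomial expansion $p(t)=\sum_i a_i t^i$ leaves only the surviving term, I obtain $p^{(i)}(0) = a_i\, i!$, which is exactly the prescribed initial condition \eqref{eq:xp initial}. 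Finally, since $\dot x_p = A_p x_p$ is linear with these initial data, its solution is unique, so the candidate is the actual trajectory and $y_p = [x_p]_0 = p$, as claimed.

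An equivalent, fully computational route would avoid the uniqueness appeal: since $A_p$ is nilpotent with $A_p^{\,n_p+1}=0$, the matrix exponential truncates to $e^{A_p t} = \sum_{k=0}^{n_p} \tfrac{t^k}{k!} A_p^k$, and $A_p^k$ acts as a $k$-fold index shift with $(A_p^k v)_i = v_{i+k}$. Then the first coordinate of $x_p(t)=e^{A_p t} x_p(0)$ is $[x_p(t)]_0 = \sum_{k=0}^{n_p} \tfrac{t^k}{k!}[x_p(0)]_k = \sum_{k=0}^{n_p} \tfrac{t^k}{k!}\, a_k\, k! = \sum_{k=0}^{n_p} a_k t^k = p(t)$, recovering both the output identity and the interpretation of the initial conditions in one stroke.

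There is no substantive obstacle here; the lemma is essentially the statement that a degree-$n_p$ polynomial is the output of a chain of $n_p+1$ integrators seeded by its Taylor coefficients at $0$. The only points requiring care are bookkeeping ones: respecting the $0$-based indexing of $x_p$ used in \eqref{eq:xp initial}, correctly reading the dimensions of the zero blocks in \eqref{eq:Ap}--\eqref{eq:Cp}, and tracking the factorials $i!$ that relate the derivatives $p^{(i)}(0)$ to the coefficients $a_i$. If one later wants the multidimensional statement ($d_y>1$), it follows by applying this scalar result block-diagonally as in \eqref{eq:robot system dynamics}, but the lemma as stated only needs the scalar case.
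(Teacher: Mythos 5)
Your proof is correct, but it takes a different route from the paper's. The paper works \emph{upward} through the integrator chain: it posits the ansatz $[x_p]_i=\sum_{k=0}^{n_p-i}\frac{C_{i+k}}{k!}t^k$ with undetermined integration constants, establishes it by induction starting from the constant top state $[x_p]_{n_p}=C_{n_p}$ and repeatedly integrating, then evaluates at $t=0$ to identify $C_i=[x_p(0)]_i=a_i i!$ and reads off $[x_p]_0=\sum_k a_k t^k=p(t)$. You instead guess the candidate $[x_p]_i(t)=p^{(i)}(t)$, verify by differentiation that it satisfies the chain $[\dot x_p]_i=[x_p]_{i+1}$ with $p^{(n_p+1)}\equiv 0$ and matches the prescribed initial data $p^{(i)}(0)=a_i i!$, and close with uniqueness of solutions of linear ODEs; your second variant via the truncated exponential $e^{A_p t}=\sum_{k=0}^{n_p}\frac{t^k}{k!}A_p^k$ of the nilpotent shift reaches the same conclusion in one computation. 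The two arguments produce the same solution (the paper's ansatz with $C_j=a_j j!$ is exactly $p^{(i)}(t)$), but yours makes the structural content explicit --- the state is the vector of Taylor derivatives of $p$ --- and avoids the bookkeeping of integration constants, at the mild cost of invoking existence--uniqueness (or nilpotency of $A_p$) rather than constructing the solution directly. Both are complete; no gaps.
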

\begin{proof}
The statement $y_p(t)=[x_p]_0$ follows by inspection of \eqref{eq:Cp}.
We divide the proof of the remaining statement $[x_p]_0=p$ in two parts.
In the first part, we give an expression of $[x_p]_0$ in terms of a set of integration constants $C_m$ that need to be determined. In the second part we show that the initial conditions in \eqref{eq:xp initial} determine such integration constants.
For the first part, we actually prove a formula for $[x_p]_i$ for any $i$, proceeding by induction, and beginning from the following ansatz:
\begin{equation}
\label{eq:x_i}
 [x_p]_i = \sum_{k=0}^{n_p-i} \frac{C_{i+k}}{k!}t^k,
\end{equation}
where $C_{i+k}$ are the constants of integration to be determined.
Taking the base case $i = n_p$,  (\ref{eq:x_i}) becomes:
\begin{equation}
  [x_p]_{n_p} = \sum_{k=0}^{n_p-n_p} \frac{C_{n_p+k}}{k!}t^k = C_{n_p}
\end{equation}
For the recursion, from the structure of $A_p$ in \eqref{eq:Ap}, and taking the integral of \eqref{eq:x_i},
$[x_p]_{i-1}$ is given as
\begin{multline}
[x_p]_{i-1} = \int [x_p]_{i}\de t =  \sum_{k'=0}^{n_p-i} \int \frac{C_{i+k'}}{k'!}t^{k'}\de t\\
=\sum_{k'=0}^{n_p-i}\frac{C_{i+k'}}{(k'+1)!}t^{k'+1} + C_{i-1}
\end{multline}
where $C_{i-1}$ is a constant of integration.
Performing the change of variable $k=k'+1$, this becomes
\begin{equation}\label{eq:xi minus}
[x_p]_{i-1} = \sum_{k=1}^{n_p-i+1}\frac{C_{i+k}}{k!}t^k + C_{i-1};
\end{equation}
Including $C_{i-1}$ in the sum for $k=0$, then \eqref{eq:xi minus} is the same as \eqref{eq:x_i}.
This concludes the first part of the proof.

For the second part, we first find the relation between the initial conditions $[x_p]_i(0)$ and the integration constants by evaluating \eqref{eq:x_i} for $t=0$:
\begin{equation}
 [x_p]_i(0) = \sum_{k=0}^{n_p-i} \frac{C_{i+k}}{k!}t^k = C_i
\end{equation}

Then, considering the case $i=0$ of \eqref{eq:x_i}, and using \eqref{eq:xp initial} we have
\begin{equation}
\label{eq:[x]_0}
 [x_p]_0 = \sum_{k=0}^{n_p}\frac{C_{k}}{k!}t^{k}
=\sum_{k=0}^{n_p}a_kt^{k}
\end{equation}
Comparing \eqref{eq:polynomial} to \eqref{eq:[x]_0}, we conclude that $[x_p]_0 =p(t)$.
\end{proof}

\begin{corollary}
\label{corollary:corollary1}
For arbitrary $d_y$, we have
\begin{equation}
\label{eq:polynomial_matrix}
\begin{aligned}
   A_p=\blkdiag(\{A_{p,k}\}_{k=1}^{d_y}),\\
   C_p=\blkdiag(\{C_{p,k}\}),\\
   \end{aligned}
\end{equation}
where each $A_{p,k},C_{p,k}$ is given by Lemma \ref{lemma:lemma1}.
\end{corollary}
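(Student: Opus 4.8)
The plan is to exploit the block-diagonal structure of $A_p$ and $C_p$ to reduce the multidimensional claim to $d_y$ independent applications of Lemma~\ref{lemma:lemma3}. First I would observe that, writing the aggregate state as $x_p=\stack(\{x_{p,k}\}_{k=1}^{d_y})$ and using the block-diagonal matrices of \eqref{eq:polynomial_matrix}, the reference dynamics \eqref{eq:sys reference} decouple completely: because a block-diagonal matrix acts on a stacked state one block at a time, $\dot{x}_{p,k}=A_{p,k}x_{p,k}$ holds independently for each $k$, with no cross-coupling between the subsystems. Likewise $y_p=C_p x_p$ reduces to $y_{p,k}=C_{p,k}x_{p,k}$ for each $k$.

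Next I would note that each subsystem $(A_{p,k},C_{p,k})$ is, by construction, exactly an instance of the one-dimensional ($d_y=1$) system treated in Lemma~\ref{lemma:lemma3}, associated with the $k$-th scalar component polynomial $p_k(t)$ of the vector-valued trajectory $p(t)=\stack(\{p_k\}_{k=1}^{d_y})$. Applying Lemma~\ref{lemma:lemma3} to the $k$-th subsystem, with its initial conditions chosen so that $[x_{p,k}(0)]_i=a_{k,i}\,i!$ for all $i$ as in \eqref{eq:xp initial} (where $a_{k,i}$ are the coefficients of $p_k$), yields $y_{p,k}(t)=p_k(t)=[x_{p,k}]_0(t)$ for all $t$.

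Finally, re-stacking the component outputs gives $y_p(t)=\stack(\{y_{p,k}(t)\})=\stack(\{p_k(t)\})=p(t)$, which is the desired conclusion.

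I do not expect a genuine obstacle here: the entire content is the observation that block-diagonal dynamics do not couple the subsystems, so the scalar result transfers verbatim to each block. The only point requiring care is bookkeeping, namely verifying that the stacking convention for $x_p$, $y_p$, and the initial-condition vector is consistent with the blockwise action of $A_p$ and $C_p$, so that the per-block initial conditions of Lemma~\ref{lemma:lemma3} assemble correctly into a single initial condition for the full state. This is purely notational and involves no new analysis.
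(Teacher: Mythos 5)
Your proof is correct and matches the paper's reasoning: the paper states this corollary without proof, treating it as an immediate consequence of the block-diagonal decoupling you spell out, so your argument is exactly the intended one. (Note the paper's reference to Lemma~\ref{lemma:lemma1} in the corollary statement is a typo for Lemma~\ref{lemma:lemma3}, which you correctly identified as the lemma to apply blockwise.)
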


\subsection{Combined Dynamics}
Define the aggregate state $z=\bmat{x\\x_p}$, $z\in\real{d_n}$, for the multidimensional case as the stack of the states $x$ of the agent and the states $x_p$ of the polynomial trajectory of the form:
\begin{equation}
z=\stack\bigl(
x,\{x_{p_k},\dot{x}_{p_k},\ldots,x^{(n_p-1)}_{p_k}\}_{k=1}^{n_p}\bigr),
\end{equation}
i.e., we stack the agent's state and then, for each dimension, the corresponding part of the reference state and all its derivatives.

Combining \eqref{eq:sys robot}, \eqref{eq:robot reference control}, and \eqref{eq:sys reference}, the closed-loop dynamics can be written as:
\begin{subequations}
\begin{align}
    \dot{z}&=\bmat{A&0\\0&A_p}z+\bmat{B\\0}u_z\\
    y&=\bmat{C & 0}z
\end{align}
\end{subequations}
where
\begin{equation}
    u_z=\bmat{K_y&K_p}\bmat{C&0\\0&I} z.
\end{equation}

\subsection{Convergence Conditions via Linear Matrix Inequalities}
\label{sec:Controller Synthesis}

Considering the system dynamics of the form \eqref{eq:sys robot}, we define a Lyapunov function as follows:\\
\begin{equation}
    V\left(\bmat{x\\x_p}\right)=\norm{x-C_px_p}^2.
\end{equation}
Using the combined state $z$, the function $V$ can then be rewritten in the form
\begin{equation}
    V=z^TMz,
\end{equation}
where $M$ is given by
\begin{equation}
\label{eq:Lyapunov trajectory matrix}
  M=\bmat{I&-C_p\\-C_p^T&C_p^TC_p}.
\end{equation}
Computing the derivative of the Lyapunov function $V$, we obtain
\begin{multline}
\label{eq:constraint in k}
\begin{aligned}
\dot{V}&=\nabla V^T\dot{z}\\
&=z^T\left((M+M^T)^T \cQ+ \cB K\cC \right)z
\end{aligned}
\end{multline}
where $K=\bmat{K_y&K_p}$, $\cQ=\blkdiag(A,A_p)$ $\cB=\stack(B,0_{n_pd\times d})$, and $\cC=\blkdiag(C,I_{n_p})$. 
\begin{definition}
A function $V$ has \emph{relative degree one} with respect to the dynamics \eqref{eq:sys robot} if $\cL_B V\defeq \nabla V\transpose B\neq 0$ (where $\cL_B$ denotes the vector of Lie derivatives with respect to the fields given by the columns of $B$).
\end{definition}
\begin{assumption}\label{assumption:rel degree}
The Lyapunov function $V$ has relative degree equal to one, i.e., $M\cB\neq 0$.
\end{assumption}

Intuitively, Assumption \ref{assumption:rel degree} implies that $\dot V$ explicitly depends on the control $u$ (and hence on the choice of $K$).
\begin{proposition}
\label{proposition:proposition1}
A sufficient and necessary condition that ensures $\dot{V}\leq 0$ for all $z$ is that there exists $\mu\geq 0$ such that
\begin{equation}
\label{eq:convex optimization S lambda}
    S\preceq -\mu I,
\end{equation}
where:
\begin{equation}
\label{eq:convex optimization S}
S=\symm\left((M+M^T)^T\left(\cQ+\cB K \cC \right)\right)
\end{equation}
\end{proposition}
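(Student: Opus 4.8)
The plan is to reduce the pointwise condition $\dot{V}\le 0$ for all $z$ to the negative semidefiniteness of a single symmetric matrix, and then to observe that this is exactly what the parametrized family $S\preceq-\mu I$ encodes as $\mu$ ranges over the nonnegative reals. First I would start from the closed-loop Lyapunov derivative computed in \eqref{eq:constraint in k}. Writing the agent-plus-reference dynamics as $\dot{z}=(\cQ+\cB K\cC)z$ and using $V=z\transpose M z$ with $M$ symmetric, the standard computation $\dot{V}=\nabla V\transpose\dot{z}$ gives $\dot{V}=z\transpose W z$, where $W=(M+M\transpose)\transpose(\cQ+\cB K\cC)$ is precisely the matrix whose symmetric part defines $S$ in \eqref{eq:convex optimization S}.

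The key step is the elementary identity that a quadratic form sees only the symmetric part of its matrix: for every $z$ one has $z\transpose W z=z\transpose\,\symm(W)\,z$, since the skew-symmetric remainder $\tfrac12(W-W\transpose)$ produces a quadratic form equal to its own transpose and hence to its own negative, which forces it to vanish. Applying this with $S=\symm(W)$ yields $\dot{V}=z\transpose S z$ for all $z$, with $S$ symmetric by construction. Consequently, by the very definition of negative semidefiniteness, the condition $\dot{V}\le 0$ for all $z$ is equivalent to $S\preceq 0$.

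It then remains to match $S\preceq 0$ with the form \eqref{eq:convex optimization S lambda}. For necessity, if $\dot{V}\le 0$ for all $z$ then $S\preceq 0$, and choosing $\mu=0$ exhibits a $\mu\ge 0$ with $S\preceq-\mu I$. For sufficiency, if $S\preceq-\mu I$ for some $\mu\ge 0$, then $-\mu I\preceq 0$ gives $S\preceq 0$, so $z\transpose S z\le 0$ and therefore $\dot{V}\le 0$ for every $z$. This closes both directions and establishes the claimed equivalence.

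There is no deep obstacle here: the entire content is the symmetrization identity together with the definition of semidefiniteness. The one point that warrants care is the bookkeeping, namely verifying that the matrix $W$ extracted from \eqref{eq:constraint in k} is exactly the argument of $\symm(\cdot)$ in \eqref{eq:convex optimization S} (in particular that $M$ is symmetric, so $(M+M\transpose)\transpose=M+M\transpose$). I would also remark that the slack $\mu$ is strictly redundant for the bare inequality $\dot{V}\le 0$, for which $\mu=0$ always suffices; it is retained because it is the quantity the downstream SDP maximizes in order to obtain a strict convergence-rate margin.
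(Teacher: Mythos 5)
Your proposal is correct and follows essentially the same route as the paper: substitute to get $\dot{V}=z\transpose S z$ and observe that negative semidefiniteness of $S$ is exactly the pointwise condition, with $\mu=0$ witnessing necessity. You are in fact more careful than the paper's own proof, which elides the symmetrization identity and writes the slightly sloppy chain $S\prec-\mu I\prec 0$; your closing remark that $\mu$ is redundant for the bare inequality and only matters as a convergence-rate margin in the downstream SDP is accurate.
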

\begin{proof}
Substituting \eqref{eq:convex optimization S} in \eqref{eq:constraint in k}, the derivative of the Lyapunov function $V$ can be written as
\begin{equation}
\dot{V} = z^TSz.
\end{equation}
It follows that $\dot{V}\leq 0$ for all $z$ if and only if $S\prec-\mu{}I\prec0$,
i.e., if $S$ is negative semidefinite.
\end{proof}

In practice, $\mu$ is a bound on the convergence rate of the $V$ toward zero. For systems that can decompose along different dimensions (i.e., multi-dimensional systems) [\ref{eq:robot system dynamics}], 
we can show that the satisfaction of the convergence constraints for each dimension, when they are considered separately (but with a common convergence rate), still ensures the satisfaction of the convergence constraints for the whole system.

\begin{lemma}
Let $K_{yi}$, $K_{pi}$ be matrices that satisfy the constraints \eqref{eq:convex optimization S lambda} considering only the $i$-th system (i.e., using only $A_{pi},C_{pi},A_i,B_i,C_i$) with a common $\mu$. Let $K$ be the matrix formed as
\begin{equation}
K=[\blkdiag\bigl(\{K_{yi}\}\bigr),\blkdiag\bigl(\{K_{pi})\}\bigr)].
\end{equation}
Then, $K$ satisfies the constraints for the joint system (i.e., using $A_p,C_p,A,B,C$).
\end{lemma}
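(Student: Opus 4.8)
The plan is to exploit the block-diagonal structure of every matrix appearing in $S$ and reduce the joint condition \eqref{eq:convex optimization S lambda} to the per-subsystem conditions by an orthogonal (permutation) similarity.

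First I would observe that, under the multidimensional decomposition \eqref{eq:robot system dynamics} together with Corollary \ref{corollary:corollary1}, the constituent matrices $A,B,C,A_p,C_p$ are all block-diagonal across the $d_y$ subsystems, and that the proposed $K=[\blkdiag(\{K_{yi}\}),\blkdiag(\{K_{pi}\})]$ is block-diagonal in the matching sense: its action on $\cC z=\stack(y,x_p)$ produces $\stack(\{K_{yi}y_i+K_{pi}x_{pi}\})$, i.e.\ subsystem $i$ only sees $(y_i,x_{pi})$. From these facts it follows that $M$ in \eqref{eq:Lyapunov trajectory matrix}, $\cQ=\blkdiag(A,A_p)$, $\cB=\stack(B,0)$, and $\cC=\blkdiag(C,I)$ each carry a common block structure indexed by the subsystem.

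Next I would make this precise by introducing a permutation matrix $\Pi$ on the aggregate state space that reorders $z=\stack(x,x_p)$ into $\stack(\{(x_i,x_{pi})\})$, grouping together the agent state and reference state of the same dimension, along with the compatible permutations on the $(y,x_p)$ and $u$ spaces. Because these permutations merely regroup indices, and because each constituent block couples only states of the same dimension, conjugation by $\Pi$ simultaneously block-diagonalizes all the relevant matrices: $\Pi M\Pi\transpose=\blkdiag(\{M_i\})$, $\Pi\cQ\Pi\transpose=\blkdiag(\{\cQ_i\})$, and $\Pi(\cB K\cC)\Pi\transpose=\blkdiag(\{\cB_i K_i\cC_i\})$ with $K_i=[K_{yi},K_{pi}]$. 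Since $M=M\transpose$ and $\symm(\cdot)$ commutes with orthogonal conjugation and with taking block-diagonals, I then obtain $\Pi S\Pi\transpose=\blkdiag(\{S_i\})$, where each $S_i$ is exactly the matrix \eqref{eq:convex optimization S} of the $i$-th subsystem. I would close with two elementary facts: a block-diagonal matrix whose every block satisfies $S_i\preceq-\mu I$ itself satisfies $\blkdiag(\{S_i\})\preceq-\mu I$ (the quadratic form splits as a sum over blocks), and the relation $\preceq-\mu I$ is invariant under orthogonal conjugation, so $\Pi S\Pi\transpose\preceq-\mu I$ is equivalent to $S\preceq-\mu I$. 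Combining these yields the claim.

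I expect the main obstacle to be the permutation bookkeeping rather than any inequality: one must verify that a single consistent family of permutations simultaneously block-diagonalizes the state-space matrices $M,\cQ$, the output map $\cC$, and the input map $\cB$, and that the constructed $K$ is compatible with them. The only genuinely non-obvious case is $M$, because its off-diagonal blocks $-C_p$ couple the agent and reference states; such coupling would obstruct block-diagonalization \emph{unless} $C_p$ is itself block-diagonal, which is exactly what \eqref{eq:robot system dynamics} and Corollary \ref{corollary:corollary1} guarantee, so the coupling stays within each dimension and $M$ splits cleanly.
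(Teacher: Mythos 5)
Your proof is correct and follows essentially the same route as the paper's: both reduce the joint condition \eqref{eq:convex optimization S lambda} to the per-subsystem conditions by exhibiting the joint $S$ as a block-diagonal assembly of the $S_i$ and using the fact that $\blkdiag(\{S_i\})\preceq-\mu I$ iff each $S_i\preceq-\mu I$. Your explicit permutation $\Pi$ interleaving $(x_i,x_{p_i})$ makes rigorous the reordering that the paper compresses into ``with some algebra,'' and correctly isolates the only delicate point, namely that the off-diagonal $-C_p$ blocks of $M$ split cleanly because $C_p$ is itself block-diagonal.
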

\begin{proof}
Given the decomposed system dynamics in the form seen in equation \eqref{eq:robot system dynamics} and the reference system dynamics for the polynomial trajectory given in Corollary \ref{corollary:corollary1}, the constraint for the convex optimization problem of a one-dimensional system can be written as:
\begin{equation}\label{eq:convex optimization S lambda i}
S_i \preceq -\mu I_i
\end{equation}
where:
\begin{equation}
S_i=\symm\left((M_i+M_i^T)^T\left(\cQ_i+\cB_i K_i\cC_i\right)\right)
\end{equation}
$A_i, B_i, C_{p_i}, A_{p_i}$ and $C_i$ are the matrices that correspond to the dynamics of the i-th system. $M_i$ is the matrix $M$ in \eqref{eq:Lyapunov trajectory matrix} for the i-th system.
We define $S=\blkdiag(\{S_i\})$ such that: $M=\blkdiag(\{M_i\})$, $A=\blkdiag(\{A_i\})$, $B=\blkdiag(\{B_i\})$, $C=\blkdiag(\{C_i\})$, $A_p=\blkdiag(\{A_{p_i}\})$ and $C_p=\blkdiag(\{C_{p_i}\})$\\
Combining \eqref{eq:convex optimization S lambda i} for all $i$, we have
\begin{equation}
S\preceq -\mu \blkdiag(\{I_i\}),
\end{equation}
which is the same as \eqref{eq:convex optimization S lambda}.
Next, with some algebra, one can see that
\begin{multline}
  S=\symm\Bigl((\blkdiag(\{M_i\})+(\blkdiag(\{M_i\}))^T)^T\bigl( \\ \blkdiag(\blkdiag(\{A_i\}) ,\blkdiag( \{A_{p_i}\})+
      \stack(\blkdiag(\\ \{B_i\}),0)  \times \blkdiag(\{K_i\}) \times \blkdiag(\blkdiag(\{C_i\}),I)\bigr)\Bigr)
\end{multline}
which is the same as $S$\eqref{eq:convex optimization S}.

Hence all the conditions of \Cref{proposition:proposition1} are satisfied, and the claim is proved.
\end{proof}

\begin{example}
Given the two-dimensional first-order integrator dynamics of an agent:
\begin{equation}
\label{eq:two-dimfirstintegratordynamics}
 A = \bmat{0&0\\0&0},\hspace{0.4cm}
B = \bmat{1&0\\0&1}, \hspace{0.4cm}
C = \bmat{1&0\\0&1}
\end{equation}
The gain matrix K for the two-dimensional system with dynamics in \eqref{eq:two-dimfirstintegratordynamics} is of the form:
\begin{equation}
\label{eq:K matrix}
K = \bmat{K_{y_1}&0 &K_{p_1}&0\\0&K_{y_2}&0&K_{p_2}}
\end{equation}
where $K_{y_i} \in\real{1\times d}$ and $K_{p_i} \in\real{1\times n_p}$ for $i=\{1,\cdots,d_y\}$
\end{example}

\subsection{Safety Constraints by Control Barrier Functions}
\label{sec:controlbarrier}
Given $A_{h,i}\in\real{1\times d_n}$ we can define the following candidate Barrier Function of the form:
\begin{equation}
\label{eq:CBF}
h_i(x)=A_{h,i}x+b_{h,i}
\end{equation}
where $i={1,\cdots,s_h}$ such that $s_h$ is the number of walls of the convex polygonal space that limit the agent's state $x$. The matrix $A_{h,i}$ and the vector $b_{h,i}$ describe all the faces of the convex polygonal cell that the agent must not collide with.

\subsubsection{Safety Constraints on $K$}
Considering the agent's dynamics \eqref{eq:sys robot} and a continuously differentiable function $h(x)$ 
defining a forward invariant
safe set $\cX$, the function $h(x)$ is a Control Barrier Function (CBF) if there exists $\alpha \in\real{}$ and control input $u$ such that:
\begin{equation}
\label{eq:CBFConstraint}
\cL_{Ax}h(x)+\cL_Bh(x)u(x,x_p)+\alpha^Th(x)\geq0, \forall x\in\cX, x_p\in\cP
\end{equation}
$\cP$ is the vector of bounds for the Bernstein polynomial \eqref{eq:bernsteinpolynomialbounds} and its derivatives \eqref{eq:bernsteinpolynomialderivativesbounds}. Rewriting the CBF constraint to change the inequality direction, we have:\footnote{This change becomes advantageous for the formulation of the overall control synthesis problem in \Cref{sec:Controller Synthesis}.}
\begin{equation}
\label{eq:CBFFFConstraint1}
\begin{aligned}
-&(\cL_{Ax}h(x)+\cL_Bh(x)u(x,x_p)+\alpha^Th(x))\leq0,\\
&\forall x\in\cX,x_p\in\cP
\end{aligned}
\end{equation}

Note that the constraint \eqref{eq:CBFFFConstraint1} must be satisfied for all $x$ in the cell $\cX$. In other words, the control input at each point in the cell should satisfy the CBF and CLF constraints. We implement this by rewriting the constraint \emph{for all $x$}  using the max formulation:
\begin{equation}
\label{eq:CBFconstraints1}
\left[
\begin{aligned}
\max_x&-(\cL_{Ax}h_i(x)+\cL_{B}h_i(x)KCx+\alpha^T h_i(x))\\&\subjectto
     z\in\cZ
\end{aligned}
\right]
    \leq 0
\end{equation}
where $\cZ = \biggl\{\bmat{x\\x_p}:x\in\cX, x_p \in \cP \biggr\}$.

In practice, we aim to find a controller that satisfies this CBF constraint with some margin. We therefore define a margin $\delta$ to achieve the minimum distance from the obstacles.

Therefore, taking a CBF $h_i(x)$ of the form in \eqref{eq:CBF}
where $A_{h,i}\in\real{1\times d_n}$ and $b_{h,i}$ is a scalar, we can write: 

\begin{equation}
\label{eq:CBFconstraints2}
\left[\begin{aligned}
\max_{x,x_p,\delta}&-A_{h,i}[(A+BK_yC+\alpha)x +BK_px_p]\\ 
s.t.: \\
&A_zx\leq b_z\\
&x_p \in \cP 
\end{aligned}
\right]
\leq \delta+\alpha b_{h,i}\\    
\end{equation}

where the matrix $A_z$ and the vector $b_z$ describe each polygonal convex cell as stated in Section \ref{section:environmentdecomp}.

The above constraint is convex in $K$ (because the point-wise maximum over $x$ ensures that the left-hand side of the inequality is a convex function \cite{boyd1994linear}. However, this form of the constraint is not efficient to implement in an off-the-shelf solver because the controller $K$ (which we would like to design) appears bi-linearly with the variable $x$. Instead, we can take the dual of the maximization problem, thereby obtaining:

\begin{equation}
\label{eq:CBFconstraints3}
\left[\begin{aligned}
\min_{\{\lambda_{b,i}\},\{\gamma_{b,i}\},\delta}&\;\lambda^T_{b,i}b_z + \gamma_{b,i}^T\cP\\ s.t.: \\ 
&A^T_z\lambda_{b,i}=[-A_{h,i}\cW
]^T\\
&\gamma_{b,i} = [-A_{h,i}(BK_p)]^T\\
&\lambda_{b,i}\geq 0\\
&i=\{1,\cdots,s_h\}
\end{aligned}\right]
\leq \delta+ \alpha b_{h,i}    
\end{equation}
where $\cW = A+BK_yC+\alpha$.

Note that $K$ appears linearly in the constraint and $s_h$ denotes the number of faces of $\cX$.
\begin{remark}
By strong duality, once a linear programming problem has an optimal solution, its dual is also guaranteed to have an optimal solution such that both their respective optimal costs are equal \cite{Bertsimas1998Introduction}. Therefore the constraint problems \eqref{eq:CBFconstraints2} and \eqref{eq:CBFconstraints3} are equivalent.
\end{remark}
\subsection{Control Synthesis}
\label{section:controlsynthesis}
In this section, we synthesize the controller by solving a convex optimization problem subject to both stability and safety constraints that were formulated in Sections \ref{sec:Controller Synthesis} and \ref{sec:controlbarrier}.

\begin{equation}
\label{eq:finaloptimization1}
\begin{aligned}
\min_{\mu,S, K}\;& \mu\\
s.t.:\\ &
\mu\leq0\\&S=\symm\left((M+M^T)^T\left(\cQ+\cB K\cC\right)\right)\\
&S\preceq \mu I \\
&\left[\begin{aligned}
\min_{\{\lambda_{b,i}\},\{\gamma_{b,i}\}}&\;\lambda^T_{b,i}b_z + \gamma_{b,i}^T\cP\\ s.t.:\\ 
&A^T_z\lambda_{b,i}=[-A_{h,i}
\cW]^T\\
&\gamma_{b,i} = [-A_{h,i}(BK_p)]^T\\
&\lambda_{b,i}\geq 0\\
&i=\{1,\cdots,s_h\}
\end{aligned}\right]
\leq \delta+ \alpha b_{h,i}  
\end{aligned}
\end{equation}
Furthermore, we simplify the min-min problem to obtain the optimization problem below:
\begin{equation}
\label{eq:finaloptimization}
\begin{aligned}
\min_{\mu,S,K,\delta,\{\lambda_{b,i}\},\{\gamma_{b,i}\}}\;& \mu\\
s.t.:\\ &
\mu\leq0\\&S=\symm\left((M+M^T)^T\left(\cQ+\cB K\cC\right)\right)\\
&S\preceq \mu I \\
& \lambda^T_{b,i}b_z + \gamma_{b,i}^T\cP \leq \delta + \alpha b_{h,i}\\
&A^T_z\lambda_{b,i}=[-A_{h,i}(A+BK_yC+\alpha)]^T\\
&\gamma_{b,i} = [-A_{h,i}(BK_p)]^T\\
&\lambda_{b,i}\geq 0\\
&\delta \geq 0, i=\{1,\cdots,s_h\}\\
\end{aligned}    
\end{equation}

where $K= \bmat{K_y&K_p}$. The objective function and all the constraints in this optimization problem are linear and $\cX$ is a convex set.
\begin{remark}
Note that when writing the constraint above we moved the $\min$ over $\lambda_{b,i}$ together with $S,K,\delta$. Also, note that this is an SDP problem that can be handled with off-the-shelf solvers     
\end{remark}

\subsection{Initialization and switching controllers}
An agent is initialized in a random position within one of the convex cells. In order for an agent to track the entire polynomial trajectory within the polygonal environment, the trajectory is partitioned into segments. Each convex cell contains a segment of the polynomial trajectory, which is generated using the given control points. Each segment of the polynomial trajectory ends at the final control point for that segment, which exists within the overlapping region between two consecutive convex cells. At this control point, which we refer to as the ``switching point" one polynomial segment ends while another begins.

An output feedback controller is synthesized for each polynomial segment and in effect for each convex cell. This controller which we design in \eqref{eq:finaloptimization}, enables an agent to track the polynomial segment without colliding with the walls of the convex cell. Since a controller is designed for each polynomial segment, at each switching point, where we switch between two segments of the polynomial trajectory, we simultaneously switch between the output feedback controllers as well. By tracking each polynomial segment successfully, we inherently track the entire polynomial trajectory as desired.

\section{SIMULATIONS}
\subsection{Simulation Setup and Results}
In order to assess the effectiveness of the proposed path planning algorithm, we run a set of MATLAB simulations. In our simulations, we generate a two-dimensional polygonal environment composed of ten different convex overlapping cells, given by the polygonal cells with colored walls in Fig. \ref{fig:finalimage}. 
All the reference polynomial trajectories have four control points implying that they are all cubic splines. The polygonal environment is 8-shaped with one self-intersection. 
There is one polynomial segment within each  cell that is generated using the given control points and the agents have two-dimensional single-integrator dynamics \eqref{eq:two-dimfirstintegratordynamics}.

Given the control points corresponding to each reference trajectory, we synthesize a controller for each cell by solving the optimization problem we formulated in \eqref{eq:finaloptimization}. We present the result of multiple, randomly-generated initial conditions (marked as 
$\ast$). In order to further optimize the agent's trajectories, we search for the closest point on the polynomial segment to the agent. The agent then tracks the reference polynomial trajectory from that point instead of from the beginning of the polygonal segment. 

We observe from Fig. \ref{fig:finalimage} that the multiple random initializations of the agent within each convex cell all converge to the reference polynomial trajectory while avoiding the walls of the cell. We also observe that the switch between the various controllers takes place within the overlapping region as we intended. Therefore, each simulated trajectory successfully completes its maneuver within the polygonal environment.  

\begin{figure}
    \centering
    \includegraphics[scale=0.23]{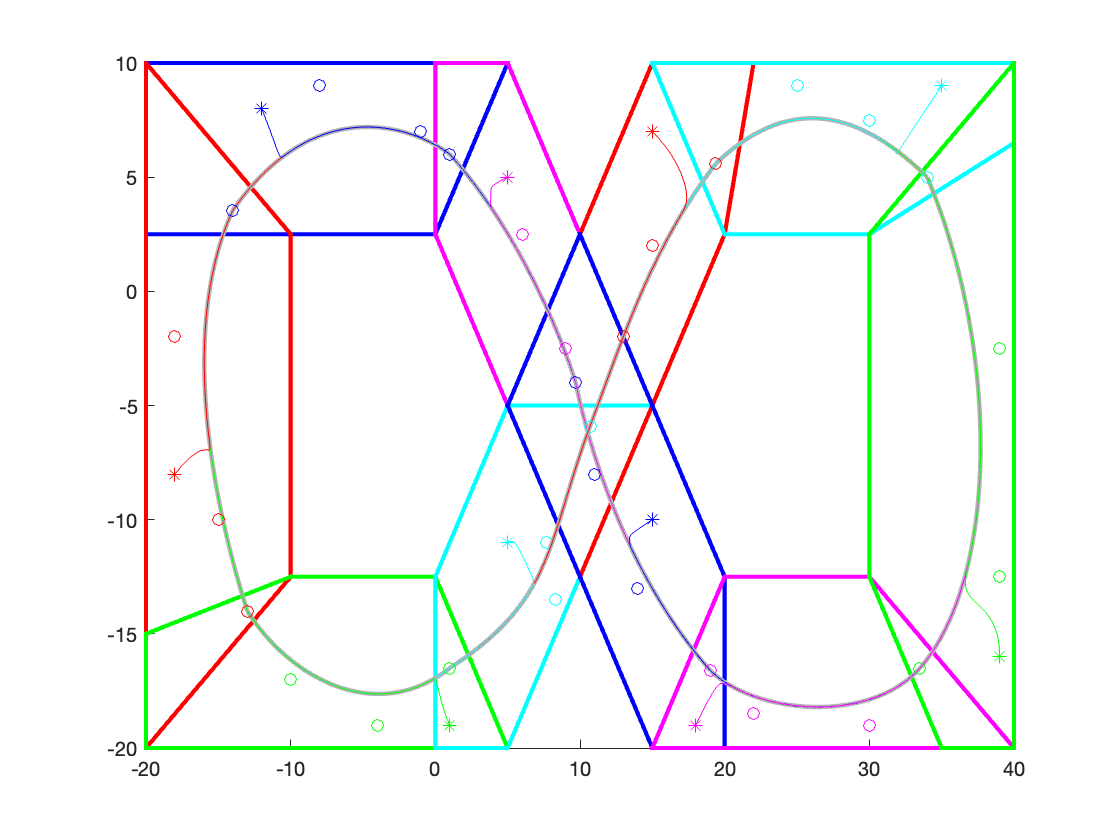}
    \caption{Simulation results without noise. This figure presents multiple initializations of an agent that tracks the predefined polynomial trajectory. Each polygonal cell with colored walls represents a convex cell. Convex cells may overlap. In the middle portion of the environment, several convex cells overlap together. The reference polynomial trajectories corresponding to the convex cells are shown as grey colored lines. The control points defining the segment for each cell have the same colors as the corresponding  cell (except for the final point, which has the color of the following cell). We randomly place an initial position for the agent system; the agent's trajectories are given by the colored lines (although, thanks to our controller, they quickly converge to the gray reference trajectory). The agent trajectory colors correspond to the colored walls of the region in which they are initialized.}
    \label{fig:finalimage}
\end{figure}

\subsection{Simulation Setup With Noise}
In this section, we test the robustness of our path planning algorithm to noise. We introduce Gaussian noise into the dynamics of our agent as seen in Fig. \ref{fig:noise1image}. In particular, at every time step of the simulation, we add a constant random value to the control input. The variance of the noise generated is 0.25.

The controllers implemented in this simulation are exactly the same as before. We observe that our controllers are robust to the presence of Gaussian noise and the agent converges without collisions with the walls of the polygonal configuration space.
\begin{figure}
    \centering
    \includegraphics[scale=0.23]{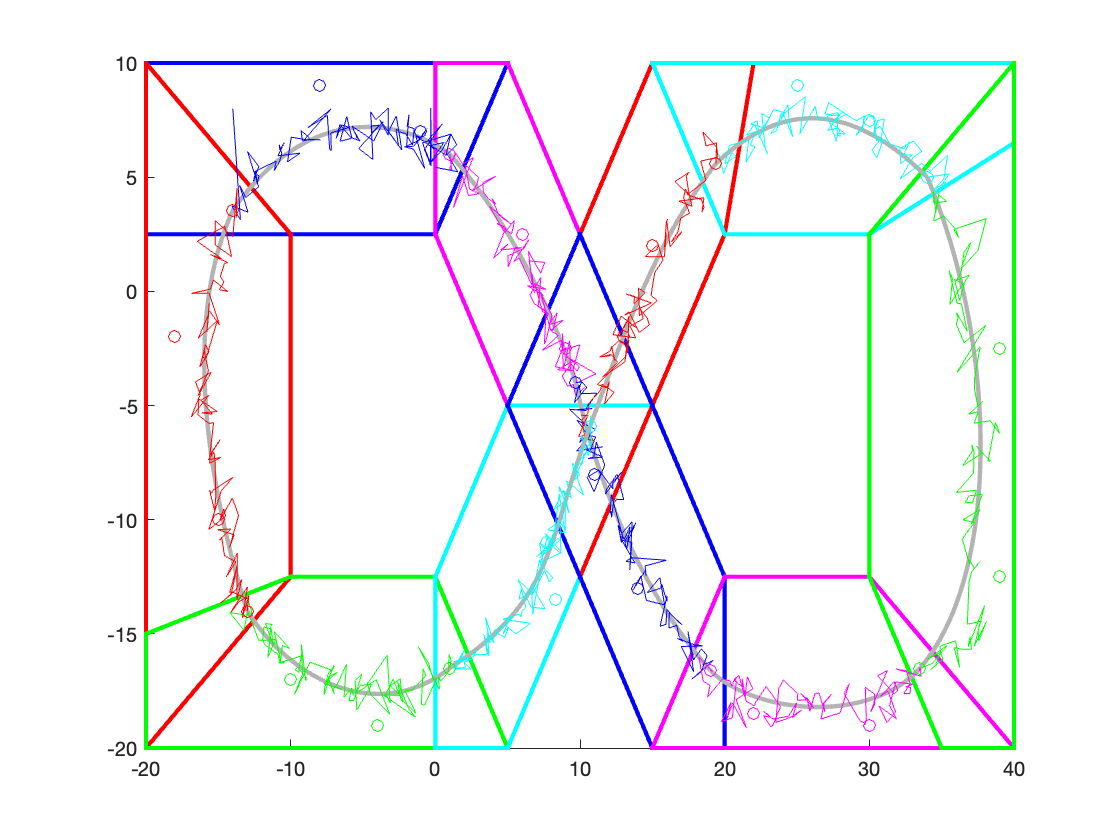}
    \caption{Simulation results with noise. We present results for one random initialization of an agent in the presence of Gaussian noise. The various colored trajectories represent the trajectory of the agent within the polygonal environment. Each colored trajectory corresponds to a convex cell with the same colored walls. Each color represents the segment of the polynomial trajectory under a corresponding controller.}
    \label{fig:noise1image}
\end{figure}

\section{CONCLUSION}
We proposed a novel technique for synthesizing linear output feedback controllers that successfully steer agents onto predefined polynomial trajectories while ensuring safety relative to the walls of the polygonal environment. The resulting controllers are robust to changes in agent initial conditions and noise. 
In the future, we will consider the joint optimization problem of finding optimal reference trajectories (control points) and synthesizing controllers to track said reference trajectories.

\bibliographystyle{biblio/ieee}
\bibliography{main}

\end{document}